\theoremstyle{plain} \theorembodyfont{\itshape}
\newtheorem{theorem}{Theorem}
\newtheorem{lemma}{Lemma}
\theoremstyle{plain} \theorembodyfont{\itshape}
\newcommand{\eqdef}{\overset{\mathrm{\Delta}}{=}}
\begin{document}
\title{On Optimum Asymptotic Multiuser Efficiency of Randomly Spread CDMA}
\author{\IEEEauthorblockN{M. A. Sedaghat, R. R. M\"{u}ller, \emph{Senior Member, IEEE}, and F. Marvasti, \emph{Senior Member, IEEE}}
\thanks{
M. A. Sedaghat is with the Department of Electronics and Telecommunications, the Norwegian University of Science and Technology (NTNU), Trondheim, Norway (e-mail: mohammad.sedaghat@iet.ntnu.no).}
\thanks{R. R. M\"{u}ller is with Friedrich-Alexander Universit\"at Erlangen-N\"urnberg, Erlangen, Germany, and
The Norwegian University of Science and Technology, Trondheim, Norway, (e-mail: mueller@lnt.de).}
\thanks{F. Marvasti is with Department of Electrical Engineering, Sharif University of Technology, Tehran, Iran (email: marvasti@sharif.edu).}}

\maketitle

\begin{abstract}
We extend the result by Tse and Verd\'{u} on the optimum asymptotic multiuser efficiency of randomly spread CDMA with Binary Phase Shift Keying (BPSK) input. Random Gaussian and random binary antipodal spreading are considered. We obtain the optimum asymptotic multiuser efficiency of a $K$-user system with spreading gain $N$ when $K$ and $N\rightarrow\infty$ and the loading factor, $\frac{K}{N}$, grows logarithmically with $K$ under some conditions. It is shown that the optimum detector in a Gaussian randomly spread CDMA system has a performance close to the single user system at high Signal to Noise Ratio (SNR) when $K$ and $N\rightarrow\infty$ and the loading factor, $\frac{K}{N}$, is kept less than $\frac{\log_3K}{2}$. Random binary antipodal matrices are also studied and a lower bound for the optimum asymptotic multiuser efficiency is obtained. Furthermore, we investigate the connection between detecting matrices in the coin weighing problem and optimum asymptotic multiuser efficiency. We obtain a condition such that for any binary input, an $N\times K$ random matrix whose entries are chosen randomly from a finite set, is a detecting matrix as $K$ and $N\rightarrow \infty$.
\end{abstract}

\begin{IEEEkeywords}
Code division multiple access (CDMA), random spreading, multiuser detection, optimum asymptotic multiuser efficiency, detecting matrices, compressive sensing.
\end{IEEEkeywords}
\newcommand{\fzv}[2]{\noindent #1 \hfill \parbox{13.2cm}{#2}}
\def\mathlette#1#2{{\mathchoice{\mbox{#1$\displaystyle #2$}}%
                              {\mbox{#1$\textstyle #2$}}%
                              {\mbox{#1$\scriptstyle #2$}}%
                              {\mbox{#1$\scriptscriptstyle #2$}}}}
\newcommand{\matr}[1]{\mathlette{\boldmath}{#1}}
\newcommand{\SINR}{{sig\-nal--to--dis\-tor\-tion ratio}}
\newcommand{\SNR}{{sig\-nal--to--noise ratio}}
\renewcommand{\j}{{\rm j}}
\newcommand{\RR}{\mathbb{R}}
\newcommand{\CC}{\mathbb{C}}
\newcommand{\NN}{\mathbb{N}}
\newcommand{\ZZ}{\mathbb{Z}}
\newcommand{\pic}{\pi}
\newcommand{\deltaf}{\delta}
\def\argmin{\mathop{\rm argmin}}
\def\argmax{\mathop{\rm argmax}}
\newcommand{\diag}{{\rm diag}}
\def\expect{\mathop{\mbox{$\mathsf{E}$}}}
\newcommand{\hermite}[1]{{#1^{\rm H}}}
\newcommand{\transp}[1]{{#1^{\rm T}}}
\newcommand{\conj}[1]{{#1^{\ast}}}
\newcommand{\e}{{\rm e}}
\newcommand{\iu}{{\rm j}}
\newcommand{\vnull}{{\rm \bf 0}}
\newcommand{\I}{{\rm \bf I}}
\newcommand{\energy}{{E}}
\newcommand{\prob}[2]{{\rm p}_{#1}\!\!\left( #2 \right) }
\newcommand{\Prob}[2]{{\rm P}_{#1}\!\!\left( #2 \right) }
\newcommand{\proba}[2]{\breve{\rm p}_{#1}\!\!\left( #2 \right) }
\newcommand{\Proba}[2]{\breve{\rm P}_{#1}\!\!\left( #2 \right) }
\newcommand{\Probi}[2]{{\rm P}_{#1}^{-1}\!\left( #2 \right) }
\newcommand{\RT}[2]{{\rm R}_{#1}\!\left( #2 \right) }
\newcommand{\Q}{{\rm Q}}
\newcommand{\load}{{\beta}}
\newcommand{\sign}{{\rm sign}}
\newcommand{\dirac}[1]{\deltaf \! \left( #1 \right)}
\newcommand{\kron}[1]{\deltaf \! \left[ #1 \right]}
\newcommand{\MomGen}[2]{{\Phif}_{#1}\left( #2 \right) }
\newcommand{\define}{\stackrel{\triangle}{=}}
\newcommand{\D}{\displaystyle}
\newcommand{\eq}[1]{(\ref{#1})}
\newcommand{\eqs}[2]{(\ref{#1}) and (\ref{#2})}
\newcommand{\eqd}[3]{(\ref{#1}), (\ref{#2}), and (\ref{#3})}
\newcommand{\eqv}[4]{(\ref{#1}), (\ref{#2}), (\ref{#3}), and (\ref{#4})}
\newcommand{\alphabet}{{\cal B}}

\section{Introduction}

Calculating optimum asymptotic multiuser efficiency of CDMA systems for different signature codes has received attention because this parameter shows the performance loss of the optimum detector in comparison with the single user system when the background noise vanishes \cite{verdu1986optimum}-\cite{tse2000optimum}. Tse and Verd\'{u} in \cite{tse2000optimum} prove that the optimum asymptotic multiuser efficiency for a CDMA system with Binary Phase Shift Keying (BPSK) input signal and independent and identically distributed (i.i.d.) random spreading with $N$ chips approaches 1 as the number of users, $K$, tends to infinity and the loading factor, $\frac{K}{N}$, is kept equal to an arbitrary nonzero constant. As a consequence, in large scale randomly spread CDMA systems, the performance is close to the performance of the single user system at high Signal to Noise Ratio (SNR). In the related context of compressive sensing \cite{donoho2006compressed}, the authors of \cite{wu2010renyi} showed that the result of Tse and Verd\'{u} is not restricted to binary input signals, but holds for any input alphabet with finite cardinality. However, authors in \cite{tse2000optimum} and \cite{wu2010renyi} obtain their results only when the loading factor is kept finite and constant. In this paper, we generalize the theorem in \cite{tse2000optimum} and obtain the optimum asymptotic multiuser efficiency of a CDMA system with random spreading in a more general condition. Random binary antipodal and random Gaussian spreading matrices are considered. Moreover, the input signal is assumed to be BPSK. It is shown that the optimum asymptotic multiuser efficiency converges to 1 also when the loading factor grows logarithmically with the number of users. 

We also investigate the connection between detecting matrices in mathematics and the optimum asymptotic multiuser efficiency. Detecting matrices originated from the coin weighing problem in mathematics \cite{soderberg1963combinatory}-\cite{mow2009recursive}. We use the bounds in random detecting matrices as supplement of the obtained results on the optimum asymptotic multiuser efficiency. We also generalize the bound on random binary detecting matrices by Erd\H{o}s and R\'{e}nyi in \cite{erdgs1963two} for matrices whose entries are chosen randomly from a finite set of numbers. 


The rest of this paper is organized as follows. In section II, the main theorems about optimum asymptotic multiuser efficiency for binary antipodal and Gaussian random spreading are presented. In Section III, the connection between detecting matrices and optimum asymptotic multiuser efficiency is investigated. Finally, section IV concludes the paper.

\section{Optimum asymptotic multiuser efficiency}
Multiuser efficiency has been used as a common performance measure of detectors in CDMA systems. For the optimum detector, it is called optimum multiuser efficiency. Optimum \textit{asymptotic} multiuser efficiency was introduced by Verd\'{u} in \cite{verdu1986optimum} to measure the performance of the optimum receiver when noise vanishes. 

Assume a randomly spread CDMA system with a discrete model 
\begin{equation} \label{model}
\matr y=\matr{Hb}+\matr n,
\end{equation}
where $\matr{H}$ is an $N\times K $ spreading matrix whose elements are i.i.d. and have a symmetric probability density function (pdf)\footnote{A random variable $x$ has a symmetric pdf $\rho(x)$ if for every $\alpha$, $\rho(\alpha)=\rho(-\alpha)$}. $\matr{b}$ is the data vector that $b_i\in\{\pm 1\}$, $\matr{n}\sim\mathcal{N}(\matr{0},\sigma^2 \matr{I})$ is the additive white Gaussian noise vector and $\matr{y}$ is the received vector. Note that in \eqref{model}, the number of users is $K$ and the number of chips is assumed to be $N$. Moreover, the users are assumed to have unit power. In the considered model, all users have the same asymptotic multiuser efficiency \cite{verdu1998multiuser}
\begin{equation}\label{5}
\eta\eqdef 2 \lim_{\sigma\rightarrow 0} \sigma^2 \log\left( \frac{1}{{\rm P_e}(\sigma)}\right),
\end{equation}
where ${\rm P_e}(\sigma)$ is the bit-error rate of the users.
Then, the optimum asymptotic multiuser efficiency is calculated as follows \cite{verdu1998multiuser}
\begin{equation}\label{5}
\eta= \min_{\matr{x}\in \{\pm1,0\}^K\setminus\{\matr{{0}}\}} {\matr{x}^T\matr{R}\matr{x}} ,
\end{equation}
where $\matr{R}\eqdef \matr{H}^\dagger \matr{H}$ and $\matr{x}$ is the error vector.
$\eta$ is in $[0,1]$ for any given $K$ and $N$. In \cite{tse2000optimum}, it is proven that when $K,N\rightarrow \infty$ and $K/N$ is kept constant and finite, $\eta$ converges to $1$ almost surely. Therefore, an interesting question is that whether it is necessary to keep $K/N$ finite. In fact, the question is what is the maximum possible $K/N$ to have $\eta$ converging to 1. This question applies to compressive sensing as well. In compressive sensing, it is desired to find a transfer matrix with minimum number of rows to compress a sparse data vector \cite{donoho2006compressed}. Note that unlike compressive sensing in this paper we consider BPSK input signal which results in a ternary error vector. 
In the remaining parts of this section we obtain some sufficient bound on $K/N$ as an extension of the result in \cite{tse2000optimum}.

Let $E_K$ be the event that ${\matr{x}^T\matr{R}\matr{x}} <1$ for at least one nonzero error vector $\matr{x}\in\{\pm1,0\}^K$. Therefore, 
\begin{equation}\label{EK}
{\rm{P}}(E_K)={\rm{P}}\left(\bigcup_{\matr{x}\in \{\pm1,0\}^K\setminus\{\matr{{0}}\}} \matr{x}^T\matr{R}\matr{x}<1\right).
\end{equation}
By applying the union bound to \eqref{EK}, an upper bound is obtained as
\begin{equation}\label{uniobound}
{\rm{P}}(E_K)\leq \sum_{\matr{x}\in \{\pm1,0\}^K\setminus\{\matr{{0}}\}}{{\rm{P}}(\matr{x}^T\matr{R}\matr{x}<1)}.
\end{equation}
%
In the following parts, we consider random binary antipodal and random Gaussian spreading matrices. We derive some conditions that $\eta$, as defined in \eqref{5}, converges to 1 almost surely.  

\subsection{The optimum asymptotic multiuser efficiency for i.i.d. binary antipodal random spreading}
In this part, it is assumed that the entries of the spreading matrix, ${H}_{i,j}$, are chosen randomly from $\left\{\pm\frac{1}{\sqrt{N}}\right\}$ with equal probability. The input signal, $\matr{b}$, is also considered to be BPSK. We first present L\textsc{emma}~\ref{Lemma 1} , {L\textsc{emma}}~\ref{Lemma 2} and {L\textsc{emma}}~\ref{Lemma 3} which are used in the main theorems.\\
\begin{lemma}\label{Lemma 1}
Let $H_{i,j}\in\left\{\pm\frac{1}{\sqrt{N}}\right\}$. For every error vector $\matr{x}\in\{\pm1,0 \}^K$ with odd weight\footnote{By the weight of a vector we mean the number of nonzero elements of it.}, ${\rm{P}}\left(\matr{x}^TR\matr{x}<1\right)=0$.
\end{lemma}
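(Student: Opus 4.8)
The plan is to prove the stronger, \emph{deterministic} statement that $\matr{x}^T\matr{R}\matr{x}\geq 1$ holds for \emph{every} realization of the spreading matrix $\matr{H}$ whenever $\matr{x}$ has odd weight; the probability asserted in the lemma is then trivially zero because the event $\{\matr{x}^T\matr{R}\matr{x}<1\}$ is empty. First I would rewrite the quadratic form using $\matr{R}=\matr{H}^\dagger\matr{H}=\matr{H}^T\matr{H}$ (the entries of $\matr{H}$ are real), which turns it into a squared Euclidean norm,
\begin{equation}
\matr{x}^T\matr{R}\matr{x}=(\matr{H}\matr{x})^T(\matr{H}\matr{x})=\|\matr{H}\matr{x}\|^2=\sum_{i=1}^N\left(\sum_{j=1}^K H_{i,j}x_j\right)^2.
\end{equation}

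The key step is a parity argument applied to each chip, i.e.\ to each row $i$. Writing $H_{i,j}=\frac{1}{\sqrt N}s_{i,j}$ with $s_{i,j}\in\{\pm1\}$, the $i$-th component of $\matr{H}\matr{x}$ equals $\frac{1}{\sqrt N}\sum_{j}s_{i,j}x_j$, and on the support of $\matr{x}$ each summand $s_{i,j}x_j$ lies in $\{\pm1\}$. Hence $\sum_{j}s_{i,j}x_j$ is a sum of exactly $w$ terms drawn from $\{\pm1\}$, where $w$ denotes the (odd) weight of $\matr{x}$. Such a sum is an integer congruent to $w$ modulo $2$, hence odd, hence nonzero, so its absolute value is at least $1$. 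Consequently $\left(\sum_{j}H_{i,j}x_j\right)^2\geq\frac{1}{N}$ for every $i$.

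Summing the $N$ per-chip bounds then gives $\matr{x}^T\matr{R}\matr{x}=\|\matr{H}\matr{x}\|^2\geq N\cdot\frac{1}{N}=1$, and this is independent of the signs $s_{i,j}$, hence valid for every realization of $\matr{H}$. Therefore the event $\{\matr{x}^T\matr{R}\matr{x}<1\}$ never occurs and its probability is $0$, which is the claim. I expect no real analytic obstacle here: the only point requiring care is the elementary observation that an odd number of $\pm1$ terms can never cancel to zero, which is exactly why the hypothesis restricts to odd weight. The even-weight case, where the inner sum \emph{can} vanish and the argument breaks down, is precisely what must be treated by different means in the lemmas that follow.
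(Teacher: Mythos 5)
Your proof is correct and follows essentially the same route as the paper: both reduce $\matr{x}^T\matr{R}\matr{x}$ to $\|\matr{H}\matr{x}\|^2$ and bound each squared component below by $\frac{1}{N}$ via the observation that an odd number of $\pm 1$ terms cannot sum to zero. You actually spell out the parity argument that the paper leaves as ``it can be shown,'' which is a welcome addition but not a different method.
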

\begin{proof}
 Let $\matr{u}(\matr{x})=\left[ u_1(\matr{x}),\cdots,u_N(\matr{x})\right]^T=\matr{H}\matr{x}$. Hence, we have
\begin{equation}\label{1}
\matr{x}^T\matr{R}\matr{x}=\parallel\matr u(\matr x)\parallel^2=\sum_{\ell=1}^{N}{u_\ell(\matr{x})^2}.
\end{equation}
For every $\matr{x}$ with an odd weight, it can be shown that
\begin{equation}\label{2}
u_\ell(\matr{x})^2\geq\frac{1}{{N}},
\end{equation}
therefore, \eqref{1} and \eqref{2} result in $\matr{x}^T\matr{R}\matr{x}\geq 1$ and this proves the lemma.
\end{proof}
\begin{lemma}\label{Lemma 2}
 Let $\matr{x}_j$ be an error vector with even weight $2j>0$ and $B_j$ be the event that the number of nonzero elements of $\matr{u}(\matr{x}_j)$ is less than $\frac{N}{4}$. Then, ${\rm{P}}(\matr{x}_j^T\matr{R}\matr{x}_j<1)\leq {\rm{P}}(B_j)$. 
\end{lemma}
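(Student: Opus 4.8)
The plan is to prove the event inclusion $\{\matr{x}_j^T\matr{R}\matr{x}_j<1\}\subseteq B_j$, after which the claimed probability bound follows immediately from monotonicity of the probability measure. The whole argument rests on a quantization property of the coordinates $u_\ell(\matr{x}_j)$ and is entirely deterministic: it holds for every realization of $\matr{H}$, so no probabilistic estimate is needed beyond this inclusion.

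First I would expand each coordinate as $u_\ell(\matr{x}_j)=\sum_{k}H_{\ell,k}(\matr{x}_j)_k$, where only the $2j$ indices with $(\matr{x}_j)_k\neq 0$ contribute. For each such index the term $H_{\ell,k}(\matr{x}_j)_k$ equals $\pm\frac{1}{\sqrt N}$, because $H_{\ell,k}\in\left\{\pm\frac{1}{\sqrt N}\right\}$ and the nonzero entries of $\matr{x}_j$ lie in $\{\pm 1\}$. Hence $u_\ell(\matr{x}_j)=\frac{1}{\sqrt N}S_\ell$, where $S_\ell$ is a sum of $2j$ signs and is therefore an even integer.

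The key step — and the one place where the even weight $2j$ is essential — is the resulting dichotomy for the squared coordinate. If $u_\ell(\matr{x}_j)=0$ then $u_\ell(\matr{x}_j)^2=0$; otherwise $S_\ell$ is a nonzero even integer, so $|S_\ell|\geq 2$ and consequently $u_\ell(\matr{x}_j)^2\geq\frac{4}{N}$. Evenness is exactly what rules out the value $\pm1$ for $S_\ell$ and forces this gap of $\frac{4}{N}$ away from zero; this is what distinguishes the present lemma from L\textsc{emma}~\ref{Lemma 1}, where odd weight instead pinned $u_\ell^2$ below by $\frac{1}{N}$ on \emph{every} coordinate.

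Finally, letting $M$ denote the number of nonzero coordinates of $\matr{u}(\matr{x}_j)$, I would sum the dichotomy over $\ell$ to obtain
\begin{equation}
\matr{x}_j^T\matr{R}\matr{x}_j=\sum_{\ell=1}^N u_\ell(\matr{x}_j)^2\geq \frac{4M}{N}.
\end{equation}
Thus $\matr{x}_j^T\matr{R}\matr{x}_j<1$ forces $M<\frac{N}{4}$, which is precisely the event $B_j$; this yields the inclusion and hence ${\rm{P}}(\matr{x}_j^T\matr{R}\matr{x}_j<1)\leq{\rm{P}}(B_j)$. There is no real analytic obstacle here — the only thing requiring care is the parity bookkeeping, namely verifying that a sum of an even number of signs lands in $2\ZZ$ so that being nonzero genuinely implies $|S_\ell|\geq 2$.
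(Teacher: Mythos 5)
Your proof is correct and follows essentially the same route as the paper's: the quantization bound $u_\ell^2(\matr{x}_j)\geq \frac{4}{N}$ for nonzero coordinates (which the paper asserts and you justify via the parity of a sum of $2j$ signs), followed by summing over $\ell$ to get the event inclusion and hence the probability inequality. Your write-up is in fact slightly more complete than the paper's, which leaves the parity argument and the final counting step implicit.
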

\begin{proof}
 The weight of $\matr{x}_j$ is an even number. Therefore, for a nonzero element $u_\ell(\matr{x}_j)$ we have
\begin{equation}\label{3A}
u_\ell^2(\matr{x}_j)\geq \frac{4}{N}.
\end{equation}
If $\matr{x}_j^T\matr{R}\matr{x}_j<1$ then from \eqref{1} and \eqref{3A} it can be proven that $B_j$ happens. Therefore,
\begin{equation}\label{4A}
{\rm{P}}(B_j|\matr{x}_j^T\matr{R}\matr{x}_j<1)=1.
\end{equation}
From \eqref{4A}, it is concluded that 
\begin{equation}\label{4B}
 {\rm{P}}(B_j)\geq {\rm{P}}(\matr{x}_j^T\matr{R}\matr{x}_j<1),
\end{equation}
and this proves the lemma. 
\end{proof}
\begin{lemma}\label{Lemma 3}
 If the entries of $\matr{H}$ are i.i.d. and have a symmetric pdf $f(\cdot)$ then for all $\matr{x}\in\{\pm 1,0\}^K$ with a same weight, ${\rm{P}}(\matr{x}^T\matr{R}\matr{x}<1)$ is equal.
\end{lemma}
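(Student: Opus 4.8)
The plan is to prove a stronger distributional statement than what is asked: that the random vector $\matr{u}(\matr{x})=\matr{H}\matr{x}$ has a law depending on $\matr{x}$ only through its weight, from which the claim follows at once via \eqref{1}, since $\matr{x}^T\matr{R}\matr{x}=\|\matr{u}(\matr{x})\|^2$ is a fixed measurable function of $\matr{u}(\matr{x})$, so equal laws of $\matr{u}(\matr{x})$ force equal values of ${\rm{P}}(\matr{x}^T\matr{R}\matr{x}<1)$. The key idea is to decompose an error vector into its support $S=\{j:x_j\neq 0\}$ and the signs $x_j\in\{\pm1\}$ carried on that support, and to show that each ingredient yields a separate distributional invariance.

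First I would use the symmetry of the pdf to remove the signs. Only the support contributes, so $u_\ell(\matr{x})=\sum_{j\in S}H_{\ell,j}x_j$. Because each entry $H_{\ell,j}$ has a symmetric density and all entries are independent, multiplying every entry of column $j$ by the deterministic sign $x_j$ leaves the joint law of the entire matrix unchanged; hence $\matr{u}(\matr{x})$ is equal in distribution to $\matr{H}\matr{1}_S$, the vector of row sums of $\matr{H}$ over the columns indexed by $S$. Thus the law of $\matr{u}(\matr{x})$ does not see the signs and depends on $\matr{x}$ only through $S$.

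Next I would use that the columns are i.i.d.\ to reduce dependence on $S$ to dependence on $|S|$ alone. Given two supports $S,S'$ with $|S|=|S'|=w$, choose a permutation $\pi$ of the column indices with $\pi(S)=S'$; permuting the columns of $\matr{H}$ by $\pi$ produces a matrix with the same distribution, while its row sums over $S$ coincide identically with the row sums of $\matr{H}$ over $S'$. Therefore $\matr{H}\matr{1}_S\stackrel{d}{=}\matr{H}\matr{1}_{S'}$. Chaining the two invariances gives $\matr{u}(\matr{x})\stackrel{d}{=}\matr{u}(\matr{x}')$ for any error vectors $\matr{x},\matr{x}'$ of common weight $w$, and in particular equal probabilities of being less than $1$.

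There is no analytic estimate in this argument, so the only obstacle is bookkeeping: keeping the two symmetries cleanly separated, namely that the sign-flip invariance uses \emph{solely} the symmetry of the marginal density (applied coordinatewise and lifted to the joint law by independence), whereas the support invariance uses \emph{solely} that the columns are identically distributed and independent. It is worth noting that the threshold $1$ is never used anywhere, so the same reasoning in fact shows that the entire distribution of $\matr{x}^T\matr{R}\matr{x}$ is a function of the weight alone.
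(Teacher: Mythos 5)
Your proof is correct and follows essentially the same route as the paper's Appendix~A argument: both use the symmetry of the entry distribution to absorb the signs of $\matr{x}$ and the i.i.d.\ assumption to reduce dependence on the support to dependence on its size, the paper phrasing this as the pdf of each $u_\ell(\matr{x})$ being the $m$-fold self-convolution of $\rho$ while you phrase it as sign-flip and column-permutation invariances of the law of $\matr{H}$. Your observation that the threshold $1$ plays no role and that the full law of $\matr{x}^T\matr{R}\matr{x}$ depends only on the weight is also implicit in the paper's proof.
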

\begin{proof}
Proof is given in Appendix A.
\end{proof}

In the following theorem, a new bound for the optimum asymptotic multiuser efficiency of a CDMA system with binary antipodal random spreading matrix is presented.\\
\begin{theorem}\label{theorem1}
 For the CDMA system \eqref{model} with $\matr{b}\in\{\pm 1\}^K$ and $H_{i,j}\in\left\{\pm\frac{1}{\sqrt{N}}\right\}$, the optimum asymptotic multiuser efficiency converges to $1$ almost surely as $K,N\rightarrow \infty$, and $\frac{K}{N\log_3{K}}$ is kept less than $\frac{3}{8}$.
\end{theorem}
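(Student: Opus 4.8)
The plan is to show that the union bound \eqref{uniobound} for $\mathrm{P}(E_K)$ tends to zero fast enough to be summable in $K$, so that by the Borel--Cantelli lemma the event $E_K$ occurs only finitely often almost surely. Since a weight-one error vector $\matr{x}=\matr{e}_i$ always yields $\matr{x}^T\matr{R}\matr{x}=\|\matr{H}\matr{e}_i\|^2=1$, we have $\eta\le 1$ with $\eta<1$ precisely on $E_K$, so this delivers $\eta\to 1$ almost surely. To evaluate the right-hand side of \eqref{uniobound} I would first discard all odd-weight vectors via Lemma~\ref{Lemma 1}, and then use Lemma~\ref{Lemma 3} to replace the sum over vectors by a sum over weights: writing $P_{2j}\eqdef\mathrm{P}(\matr{x}_j^T\matr{R}\matr{x}_j<1)$ for any fixed weight-$2j$ vector and counting the $\binom{K}{2j}2^{2j}$ such vectors, we obtain $\mathrm{P}(E_K)\le\sum_{j=1}^{\lfloor K/2\rfloor}\binom{K}{2j}2^{2j}P_{2j}$.

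Next I would bound $P_{2j}$. By Lemma~\ref{Lemma 2}, $P_{2j}\le\mathrm{P}(B_j)$, so it suffices to estimate the probability that fewer than $N/4$ components of $\matr{u}(\matr{x}_j)$ are nonzero. For a fixed weight-$2j$ vector, each component $u_\ell(\matr{x}_j)$ equals $\frac{1}{\sqrt{N}}$ times a sum of $2j$ independent $\pm1$ variables (the signs absorb the nonzero entries of $\matr{x}_j$), so it vanishes with probability $p_{2j}=\binom{2j}{j}2^{-2j}$, independently across the $N$ rows. Hence the number of \emph{zero} components is $\mathrm{Bin}(N,p_{2j})$ and $B_j$ is the event that it exceeds $3N/4$; since $p_{2j}\le\tfrac12$ this is an upper-tail large deviation, and a Chernoff bound gives $\mathrm{P}(B_j)\le\exp\!\big(-N\,D(\tfrac34\,\|\,p_{2j})\big)$, where $D(a\|p)=a\ln\frac{a}{p}+(1-a)\ln\frac{1-a}{1-p}$.

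The heart of the argument is then an exponent balance. Using $\binom{2j}{j}\le 4^j/\sqrt{\pi j}$ one gets $p_{2j}\le 1/\sqrt{\pi j}$, and since $D(\tfrac34\|p)$ is decreasing in $p$ this yields $D(\tfrac34\|p_{2j})\ge\tfrac38\ln j - C$ for an absolute constant $C$. Writing the weight as a fraction $2j=\alpha K$ and using $\binom{K}{\alpha K}2^{\alpha K}\approx\exp\!\big(K[H(\alpha)+\alpha\ln2]\big)$ with $H(\alpha)=-\alpha\ln\alpha-(1-\alpha)\ln(1-\alpha)$, each summand has logarithm $\approx K[H(\alpha)+\alpha\ln2]-\tfrac38 N\ln K$. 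The maximization $\max_{0<\alpha\le1}[H(\alpha)+\alpha\ln2]=\ln3$, attained at $\alpha=\tfrac23$ (the typical ternary weight), is exactly what pins down the constants: the envelope of the bound is governed by $K\ln3-\tfrac38 N\ln K$, which under the hypothesis $\frac{K}{N\log_3K}<\tfrac38$ is negative with a fixed gap. As there are only $O(K)$ terms, the sum decays geometrically, is summable, and Borel--Cantelli finishes the proof.

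I expect the main obstacle to be making the estimate on $\mathrm{P}(B_j)$ uniform across the full range $1\le j\le\lfloor K/2\rfloor$: one must control the constant in $D(\tfrac34\|p_{2j})\ge\tfrac38\ln j-C$ and separately dispatch the small-$j$ terms, where $D$ is only bounded but the vector count is polynomial in $K$ and $N\to\infty$ kills them. The Chernoff and Stirling estimates are routine, but the balance is genuinely tight at $\alpha=\tfrac23$, so the argument carries essentially no slack beyond the stated threshold; verifying that $\ln3$ is the true maximum of $H(\alpha)+\alpha\ln2$ is precisely what makes the constant $\tfrac38$ and the base-$3$ logarithm emerge.
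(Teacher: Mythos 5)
Your proposal is correct and follows essentially the same route as the paper: the union bound together with Lemmas~1--3 reduces everything to the binomial tail $\mathrm{P}(B_j)$, the small- and large-weight regimes are split exactly as in the paper's $S_1$/$S_2$ decomposition, and the threshold $\tfrac{3}{8}$ emerges from the same balance of $\max_t[\mathrm{h}(t)+t]=\log_2 3$ at $t=\tfrac{2}{3}$ against the $j^{-3N/8}$ decay of the tail. Your Chernoff bound $\exp\!\left(-N D(\tfrac34\|p_{2j})\right)$ is algebraically the same estimate the paper obtains by taking $N/4$ times the largest term and applying $\binom{N}{N/4}\le 2^{N\mathrm{h}(1/4)}$, so the two arguments coincide in substance.
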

\begin{proof}
Based on {L\textsc{emma}}~\ref{Lemma 1}, \eqref{uniobound} can be written as
\begin{equation}\label{boundevenbinary}
{\rm{P}}(E_K)\leq \sum_{\matr{x}\in \{\pm1,0\}^K \setminus\{\matr{{0}}\},~ \text{even weight }\matr{x}}{{\rm{P}}(\matr{x}^T\matr{R}\matr{x}<1)}.
\end{equation}
The entries of $H$ are i.i.d.  and have a symmetric pdf therefore based on {L\textsc{emma}}~\ref{Lemma 3} for all $\matr{x}$ with the same weight, ${\rm{P}}(\matr{x}^T\matr{R}\matr{x}<1)$ are equal. Thus, \eqref{boundevenbinary} can be written as follows
\begin{equation}\label{9}
{\rm{P}}(E_K)\leq \sum_{j=1}^{\lfloor\frac{K}{2}\rfloor}{{K\choose 2j} 2^{2j} {\rm{P}}(\matr{x}_j^T\matr{R}\matr{x}_j<1)},
\end{equation}
where $\matr{x}_j$ is an arbitrary vector with weight $2j$. 
Using {L\textsc{emma}}~\ref{Lemma 2} in \eqref{9} results in
\begin{equation}\label{10}
{\rm{P}}(E_K)\leq  \sum_{j=1}^{\lfloor\frac{K}{2}\rfloor}{{K\choose 2j} 2^{2j} {\rm{P}}(B_j)}=\sum_{j=1}^{\lfloor\frac{K}{2}\rfloor}{{K\choose 2j} 2^{2j} {\sum_{i=0}^{\frac{N}{4}-1} {{N\choose i}{\rm{p}}(j)^{N-i}(1-{\rm{p}}(j))^i} }},
\end{equation}
where
\begin{equation}\label{11}
{\rm{p}}(j)={\rm{P}}(u_\ell(\matr{x}_j)=0)={2j\choose j}2^{-2j}.
\end{equation}
The Binomial distribution function $ f(i)={{N\choose i}{\rm{p}}(j)^{N-i}(1-{\rm{p}}(j))^i} $ is an increasing function for $i<i_{\rm m}\eqdef \lfloor N(1-{\rm{p}}(j))\rfloor$. Furthermore, in Appendix \ref{B} it is proven that $ \left\lfloor N\left( 1-{\rm{p}}(j)\right) \right\rfloor \geq  \left \lfloor\frac{N}{2}\right \rfloor$. Therefore, an upper bound for \eqref{10} is derived as 
\begin{eqnarray}\label{13}
{\rm{P}}(E_K)&\leq&\sum_{j=1}^{\lfloor\frac{K}{2}\rfloor}{{K\choose 2j} 2^{2j} {\sum_{i=0}^{\frac{N}{4}-1} {{N\choose i}{\rm{p}}(j)^{N-i}(1-{\rm{p}}(j))^i} }}\nonumber \\
&\leq&
\sum_{j=1}^{\lfloor\frac{K}{2}\rfloor}{{K\choose 2j} 2^{2j} \frac{N}{4}{{{N\choose \frac{N}{4}}{\rm{p}}(j)^{N-\frac{N}{4}}(1-{\rm{p}}(j))^{\frac{N}{4}}} }}.
\end{eqnarray}
To simplify more, the following inequality is used 
\begin{eqnarray}\label{14A}
{m\choose r}\leq 2^{m{\rm{h}}\left(\frac{r}{m}\right)},
\end{eqnarray}
where
\begin{eqnarray}\label{15}
{\rm{h}}(t)=-t\log_2t-(1-t)\log_2(1-t),
\end{eqnarray}
denotes the binary entropy function.
The proof of \eqref{14A} is given in Appendix \ref{C}. By using \eqref{14A}, \eqref{13} can be written as
 \begin{equation}\label{15A}
{\rm{P}}(E_K)\leq
\sum_{j=1}^{\lfloor\frac{K}{2}\rfloor}{  \frac{N}{4} 2^{K\left({\rm{h}}\left(\frac{2j}{K}\right)+\frac{2j}{K}\right)} \left[2^{4{\rm{h}}\left( \frac{1}{4}\right)}{\rm{p}}(j)^{3}(1-{\rm{p}}(j))\right]^{N/4} }.
\end{equation}

We wish to prove that the bound in \eqref{15A} converges to 0 in the limit $K,N\rightarrow \infty$ while $\zeta=\frac{K}{N\log_3K}$ is kept less than $\frac{3}{8}$. In this regard, we divide the range of summation $j=1,\cdots,{\lfloor\frac{K}{2}\rfloor}$ into two, as follows
 \begin{eqnarray}\label{15A1}
{\rm{P}}(E_K)&\leq& S_1+S_2, \\
S_1&=&\sum_{j=1}^{j_0}{  \frac{N}{4} 2^{K\left({\rm{h}}\left(\frac{2j}{K}\right)+\frac{2j}{K}\right)} \left[2^{4{\rm{h}}\left( \frac{1}{4}\right)}{\rm{p}}(j)^{3}(1-{\rm{p}}(j))\right]^{N/4} }, \\
S_2&=&\sum_{j=j_0+1}^{\lfloor\frac{K}{2}\rfloor}{  \frac{N}{4} 2^{K\left({\rm{h}}\left(\frac{2j}{K}\right)+\frac{2j}{K}\right)} \left[2^{4{\rm{h}}\left( \frac{1}{4}\right)}{\rm{p}}(j)^{3}(1-{\rm{p}}(j))\right]^{N/4} }, \label{15A12}
\end{eqnarray}
where $j_0=\lfloor\frac{K}{2(\log_2K)^u}\rfloor$ and $u>1$ is a constant.

We first show that the sum $S_1$ tends to 0. We use the bound 
 \begin{eqnarray}\label{15A2}
{\rm h}(t)+t\leq -t\left(\log_2t -\log_2(2{\rm e}) \right),
\end{eqnarray}
which is tight as $t\rightarrow +0$. Since the right-hand side of \eqref{15A2} is an increasing function of $t$, one has, for $j=1,\cdots,j_0$,
 \begin{eqnarray}\label{15A3}
h\left(\frac{2j}{K}  \right)+\frac{2j}{K}\leq \frac{u \log_2\log_2K}{(\log_2K)^u}+\frac{\log_2(2{\rm e})}{(\log_2K)^u}.
\end{eqnarray}
On the other hand, $p(j)$ is a decreasing function of $j$ with $p(1)=\frac{1}{2}$. Furthermore, for $0\leq x \leq \frac{1}{2}$, the function $x^3(1-x)$ is an increasing function of $x$. Thus the factor $p(j)^3\left(1-p(j) \right)$ can be bounded from above by $\frac{1}{2^4}$, so that one has 
\begin{eqnarray}
2^{4{\rm{h}}\left( \frac{1}{4}\right)}{\rm{p}}(j)^{3}(1-{\rm{p}}(j))\leq 2^{4\left({\rm h}\left(\frac{1}{4} \right)-1 \right)}.
\end{eqnarray}
One thus has 
\begin{eqnarray} \label{S11}
S_1\leq  \frac{KN}{8\left ( \log_2K \right)^u} 2^{\frac{u K\log_2\log_2K}{(\log_2K)^u}+\frac{K\log_2(2{\rm e})}{(\log_2K)^u}+N\left( {\rm h}\left(\frac{1}{4} \right)-1  \right)}
\end{eqnarray}
As $K\rightarrow \infty$, the following holds
\begin{eqnarray} \label{trik}
\frac{\log_2\log_2K}{\left( \log_2K\right)^u}=o\left( \frac{1}{\log_2K}\right). 
\end{eqnarray}
To show that it holds, consider the behavior of 
\begin{eqnarray}
\frac{\log_2\log_2K}{\left( \log_2K\right)^{u-1}}
\end{eqnarray}
as $K\rightarrow \infty$. Let $\kappa\eqdef \left( \log_2K\right)^{u-1} $. Then one has $\kappa\rightarrow \infty$ as $K\rightarrow \infty$, and 
\begin{eqnarray}
\frac{\log_2\log_2K}{\left( \log_2K\right)^{u-1}}=\frac{\log_2\kappa}{(u-1)\kappa}\rightarrow 0,
\end{eqnarray}
which proves \eqref{trik}. Using \eqref{trik} results in that the dominant term in the exponent of \eqref{S11} is $N\left( {\rm h}\left(\frac{1}{4} \right)-1  \right)$, which tends to $-\infty$ as $K\rightarrow \infty$. This proves that $S_1\rightarrow0$ holds. 

We next show that $S_2$ also tends to 0 provided that $\zeta<\frac{3}{8}$ holds. To this end we use some bounds as follows. Since the function ${\rm h}(t)+t$ takes its maximum at $t=\frac{2}{3}$, one has 
 \begin{eqnarray}
h\left(\frac{2j}{K}  \right)+\frac{2j}{K}\leq  h\left(\frac{2}{3}  \right)+\frac{2}{3}=\log_23.
\end{eqnarray} 
Next we use the following bound 
 \begin{eqnarray}\label{22}
{\rm{p}}\left(j\right)={2j\choose{j}} 2^{-2j}\leq \frac{{\rm e}}{\pi \sqrt{2j}},
\end{eqnarray}
which is proven in Appendix~\ref{D}, as well as the fact that $p(j)$ is a decreasing function of $j$. Thus one has, for  $j=j_0+1,\cdots,\lfloor\frac{K}{2}\rfloor$,
 \begin{eqnarray} \label{bound3be2}
p(j)^3(1-p(j))\leq \left(\frac{\pi^2K}{{\rm e}^2\left(\log_2K \right)^u}   \right)^{-3/2}.
\end{eqnarray}
Applying the presented bounds, one can write \eqref{15A12} as 
 \begin{eqnarray}\label{upperboundI2}
S_2\leq \left(\frac{K}{2}-\frac{K}{2\left(\log_2K\right)^u}+1\right) \frac{N}{4} 2^{K\log_23-\left(3N/8\right)\left (\log_2K+2\log_2\left(\pi/{\rm e}\right) -u\log_2\log_2K  \right)}.
\end{eqnarray}
The right-hand side vanishes in the limit $K\rightarrow \infty$ provided that $N$ goes to infinity in such a way as to satisfy 
\begin{eqnarray}
\zeta=\frac{K \log_23}{N\log_2K}<\frac{3}{8}.
\end{eqnarray}
This proves that $S_2\rightarrow 0$ holds when $\zeta <3/8$. 
Therefore, we have
\begin{eqnarray}\label{31}
\lim_{K\rightarrow\infty}{\rm{P}}(E_K)=0,
\end{eqnarray}
if
\begin{equation} 
 \zeta<\frac{3}{8}.
 \end{equation}
 Finally, \eqref{S11}, \eqref{upperboundI2} yield 
\begin{equation}
\sum_{K=1}^{\infty} {\rm{P}}(E_K)<\infty,
\end{equation}
if $\zeta<\frac{3}{8}$. Therefore, by using the Borel-Cantelli lemma \cite{bremaud1988introduction} and the fact that the maximum possible value for $\eta$ is $1$, it is concluded that $\eta$ converges to $1$ almost surely if $\zeta<\frac{3}{8}$.
\end{proof}

In the next theorem we derive a lower bound for the optimum asymptotic multiuser efficiency of binary antipodal randomly spread CDMA for $\frac{3}{8}\leq\zeta<\frac{1}{2}$. \\
\begin{theorem}\label{theorem2}
 The optimum asymptotic multiuser efficiency is greater than $\gamma\in (0,1)$ almost surely as $K,N\rightarrow \infty$, if
$\zeta=\frac{K}{N\log_3{K}}$ is kept less than $(4-\gamma)/8$.
\end{theorem}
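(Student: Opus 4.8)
The plan is to repeat the argument of Theorem~\ref{theorem1} almost verbatim, replacing the threshold $1$ by $\gamma$ throughout. I would define $E_K^\gamma$ to be the event that ${\matr{x}^T\matr{R}\matr{x}}<\gamma$ for at least one nonzero $\matr{x}\in\{\pm1,0\}^K$, and bound ${\rm{P}}(E_K^\gamma)$ by the union bound exactly as in \eqref{uniobound}. First I would check that the three lemmas carry over. For odd-weight $\matr{x}$, L\textsc{emma}~\ref{Lemma 1} already gives $\matr{x}^T\matr{R}\matr{x}\geq 1>\gamma$, so these terms drop out. For even weight $2j$, the bound $u_\ell^2(\matr{x}_j)\geq 4/N$ from \eqref{3A} shows that $\matr{x}_j^T\matr{R}\matr{x}_j<\gamma$ forces fewer than $\gamma N/4$ of the entries of $\matr{u}(\matr{x}_j)$ to be nonzero; this is the $\gamma$-analog of L\textsc{emma}~\ref{Lemma 2}. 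L\textsc{emma}~\ref{Lemma 3} applies unchanged. Hence ${\rm{P}}(E_K^\gamma)$ is bounded by the same double sum as in \eqref{10}, but with the inner summation running over $i=0,\dots,\lceil\gamma N/4\rceil-1$.

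Since $\gamma<1$, one has $\gamma N/4<N/2\leq\lfloor N(1-{\rm{p}}(j))\rfloor$, so the inner binomial terms still lie in the increasing region and the inner sum is bounded by its number of terms times the largest term, evaluated at $i=\gamma N/4$. Applying the entropy bound \eqref{14A} to both $\binom{K}{2j}$ and $\binom{N}{\gamma N/4}$ yields the $\gamma$-version of \eqref{15A}, in which the bracketed factor becomes $\left[2^{4{\rm{h}}(\gamma/4)}{\rm{p}}(j)^{4-\gamma}(1-{\rm{p}}(j))^{\gamma}\right]^{N/4}$ (setting $\gamma=1$ recovers \eqref{15A}). The one genuinely new computation is to locate the maximum of $g(p)=p^{4-\gamma}(1-p)^{\gamma}$ over $p\in(0,\tfrac12]$: its stationary point sits at $p^\ast=1-\gamma/4>\tfrac34$, so $g$ is increasing on $(0,\tfrac12]$ and $g(p)\leq g(\tfrac12)=2^{-4}$ for every $\gamma\in(0,1)$. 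Thus ${\rm{p}}(j)^{4-\gamma}(1-{\rm{p}}(j))^{\gamma}\leq 2^{-4}$, and the bracket is at most $2^{4({\rm{h}}(\gamma/4)-1)}$, exactly as in the $\gamma=1$ case but with ${\rm{h}}(1/4)$ replaced by ${\rm{h}}(\gamma/4)$.

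I would then split the sum at $j_0=\lfloor K/(2(\log_2K)^u)\rfloor$ into $S_1$ and $S_2$ as in \eqref{15A1}. The bound on $S_1$ goes through word for word, the dominant exponent being $N({\rm{h}}(\gamma/4)-1)$, which tends to $-\infty$ because $\gamma/4<1/2$ forces ${\rm{h}}(\gamma/4)<1$. For $S_2$ I would again use ${\rm{h}}(2j/K)+2j/K\leq\log_23$ and the Stirling bound \eqref{22}, now raising ${\rm{p}}(j)$ to the power $4-\gamma$, so that the analog of \eqref{bound3be2} reads ${\rm{p}}(j)^{4-\gamma}(1-{\rm{p}}(j))^{\gamma}\leq\big(\pi^2K/({\rm e}^2(\log_2K)^u)\big)^{-(4-\gamma)/2}$. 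The dominant exponent of $S_2$ is then $K\log_23-\tfrac{N(4-\gamma)}{8}\log_2K$ up to lower-order terms, and this diverges to $-\infty$ precisely when $\zeta=\frac{K\log_23}{N\log_2K}<\frac{4-\gamma}{8}$. Summability of ${\rm{P}}(E_K^\gamma)$ then follows and Borel--Cantelli gives $\eta\geq\gamma$ almost surely; the strict inequality $\eta>\gamma$ is recovered from the strictness of the loading condition, since a slightly larger $\gamma'>\gamma$ with $\gamma'<1$ still satisfies $\zeta<(4-\gamma')/8$ and hence yields $\eta\geq\gamma'>\gamma$.

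I expect the only real obstacle to be bookkeeping rather than a new idea: one must ensure that replacing the constant threshold $1$ by $\gamma$ does not break the monotonicity facts borrowed from Appendices~\ref{B} and \ref{D}, in particular that $\gamma N/4$ stays safely below the mode $N(1-{\rm{p}}(j))$ of the binomial so that the largest-term estimate remains valid, and that the factor $g(p)=p^{4-\gamma}(1-p)^\gamma$ attains its maximum at the endpoint $p=\tfrac12$ rather than interior to $(0,\tfrac12]$. Both hold uniformly for $\gamma\in(0,1)$, so the argument of Theorem~\ref{theorem1} degrades continuously into the weaker conclusion $\eta>\gamma$ under the relaxed loading condition $\zeta<(4-\gamma)/8$.
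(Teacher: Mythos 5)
Your proposal is correct and follows essentially the same route as the paper: the paper likewise introduces the $\gamma$-analog of L\textsc{emma}~\ref{Lemma 2} (its L\textsc{emma}~\ref{Lemma 4}), arrives at the bracket $\bigl[2^{4{\rm{h}}(\gamma/4)}{\rm{p}}(j)^{4-\gamma}(1-{\rm{p}}(j))^{\gamma}\bigr]^{N/4}$, verifies via the same derivative computation that $x^{4-\gamma}(1-x)^{\gamma}$ is increasing on $\left[0,1-\gamma/4\right)$ so the factor is at most $2^{-4}$, and splits the sum at the same $j_0$ with the same $(4-\gamma)/2$ exponent in the tail bound. Your closing remark on upgrading $\eta\geq\gamma$ to $\eta>\gamma$ by passing to a slightly larger $\gamma'$ is a small point of care the paper glosses over, but otherwise the two arguments coincide.
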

\begin{proof}
To prove T\textsc{heorem}~\ref{theorem2}, we first introduce a generalized form of L\textsc{emma}~\ref{Lemma 2} as L\textsc{emma}~\ref{Lemma 4}.\\
\begin{lemma} \label{Lemma 4}
 Let $\matr{x}_j$ be a vector with weight $2j$ and $V_j$ be the event that the number of nonzero elements of $\matr{u}(\matr{x}_j)=[u_1(\matr{x}_j),u_2(\matr{x}_j),\cdots,u_N(\matr{x}_j)]^T$ is less than $\frac{N\gamma}{4}$ where $\gamma\in (0,1)$. Then, ${\rm{P}}(\matr{x}_j^TR\matr{x}_j<\gamma)\leq {\rm{P}}(V_j)$.  
\end{lemma}

Let $E_{K,\gamma}$ be the event that $\matr{x}^T\matr{R}\matr{x}<\gamma$ for at least one nonzero error vector $\matr{x}$. Then, a similar procedure as in the proof of T\textsc{heorem}~\ref{theorem1} is used. Moreover, in the proof of T\textsc{heorem}~\ref{theorem2},  L\textsc{emma}~\ref{Lemma 4} is used instead of L\textsc{emma}~\ref{Lemma 2}. 
One has
\begin{eqnarray}
{\rm{P}}(E_{K,\gamma})&=&{\rm{P}}\left(\bigcup_{\matr{x}\in \{\pm1,0\}^K\setminus\{\matr{{0}}\}} \matr{x}^T\matr{R}\matr{x}<\gamma\right)\nonumber \\
&\leq&\sum_{\matr{x}\in \{\pm1,0\}^K\setminus\{\matr{{0}}\}}{{\rm{P}}(\matr{x}^T\matr{R}\matr{x}<\gamma)} \nonumber \\
&=&\sum_{j=1}^{\lfloor\frac{K}{2}\rfloor}{{K\choose 2j} 2^{2j} {\rm{P}}(\matr{x}_j^T\matr{R}\matr{x}_j<\gamma)} \nonumber \\
&\leq& \sum_{j=1}^{\lfloor\frac{K}{2}\rfloor}{{K\choose 2j} 2^{2j} {\rm{P}}(V_j)} \nonumber \\
&\leq&\sum_{j=1}^{\lfloor\frac{K}{2}\rfloor}{{K\choose 2j} 2^{2j} {\sum_{i=0}^{\frac{N\gamma}{4}-1} {{N\choose i}{\rm{p}}(j)^{N-i}(1-{\rm{p}}(j))^i} }}\nonumber \\
&\leq&\sum_{j=1}^{\lfloor\frac{K}{2}\rfloor}{{K\choose 2j} 2^{2j} \frac{N\gamma}{4}{{{N\choose \frac{N\gamma}{4}}{\rm{p}}(j)^{N-\frac{N\gamma}{4}}(1-{\rm{p}}(j))^{\frac{N\gamma}{4}}} }} \nonumber \\
&\leq&
\sum_{j=1}^{\lfloor\frac{K}{2}\rfloor}{  \frac{N\gamma}{4} 2^{K\left({\rm{h}}\left(\frac{2j}{K}\right)+\frac{2j}{K}\right)} \left[2^{4{\rm{h}}\left( \frac{\gamma}{4}\right)}{\rm{p}}(j)^{4-\gamma}(1-{\rm{p}}(j))^\gamma\right]^{N/4} }.
\end{eqnarray}
We bound the right-hand side of the above inequality in the same way as in the proof of T\textsc{heorem}~\ref{theorem1}. Specially, we have 
\begin{eqnarray}\label{th2S34}
{\rm{P}}(E_{K,\gamma})&\leq& S_{1,\gamma}+S_{2,\gamma}, \\
S_{1,\gamma}&=&\sum_{j=1}^{j_0}{  \frac{N\gamma}{4} 2^{K\left({\rm{h}}\left(\frac{2j}{K}\right)+\frac{2j}{K}\right)} \left[2^{4{\rm{h}}\left( \frac{\gamma}{4}\right)}{\rm{p}}(j)^{4-\gamma}(1-{\rm{p}}(j))^\gamma\right]^{N/4} }, \\
S_{2,\gamma}&=&\sum_{j=j_0+1}^{\lfloor\frac{K}{2}\rfloor}{  \frac{N\gamma}{4} 2^{K\left({\rm{h}}\left(\frac{2j}{K}\right)+\frac{2j}{K}\right)} \left[2^{4{\rm{h}}\left( \frac{\gamma}{4}\right)}{\rm{p}}(j)^{4-\gamma}(1-{\rm{p}}(j))^\gamma\right]^{N/4} },
\end{eqnarray}
where $j_0=\lfloor\frac{K}{2(\log_2K)^u}\rfloor$ and $u>1$ is a constant.
We will show that both $S_{1,\gamma}$ and $S_{2,\gamma}$ tend to 0 as $K,N\rightarrow \infty$ while $\zeta$ is kept less than $\frac{4-\gamma}{8}$.

We first show that $S_{1,\gamma}\rightarrow 0$ holds. since one has 
\begin{eqnarray}
\frac{d}{dx}x^{4-\gamma}(1-x)^\gamma&=&(4-\gamma)x^{3-\gamma}(1-x)^\gamma-\gamma x^{4-\gamma}(1-x)^{\gamma-1} \nonumber \\
&=&x^{3-\gamma}(1-x)^{\gamma-1}\left((4-\gamma)(1-x)-\gamma x \right) \nonumber \\
&=&x^{3-\gamma} (1-x)^{\gamma-1} \left( (4-\gamma)-4x \right),
\end{eqnarray}
the function $x^{4-\gamma}(1-x)^\gamma$ is an increasing function for $x\in \left[0,1-\frac{\gamma}{4}\right. \left. \right)$.
Note that the interval $\left[0,1-\frac{\gamma}{4}\right. \left. \right)$ contains 1/2 whenever $\gamma\in(0,1)$. Thus, the factor $p(j)^{4-\gamma}(1-p(j))^\gamma$ can be bounded from above by $1/2^4$. The proof that $S_{1,\gamma}\rightarrow 0$ holds is complete by observing 
\begin{eqnarray}
2^{4{\rm h}\left( \frac{\gamma}{4} \right)} p(j)^{4-\gamma}(1-p(j))^\gamma \leq 2^{4\left({\rm h}\left( \frac{\gamma}{4}\right)-1  \right)}.
\end{eqnarray}
In order to prove that $S_{2,\gamma}\rightarrow 0$ holds, we use instead of \eqref{bound3be2} the following inequality
 \begin{eqnarray} \label{bound3be2new}
p(j)^{4-\gamma}(1-p(j))^\gamma \leq \left(\frac{\pi^2K}{{\rm e}^2\left(\log_2K \right)^u}   \right)^{-(4-\gamma)/2}.
\end{eqnarray}
Repeating the same argument as in the proof of T\textsc{heorem}~\ref{theorem1}, one can show that $S_{2,\gamma}\rightarrow 0$ holds when $\zeta<(4-\gamma)/8$.
Therefore, it is obtained that $\eta$ is greater than $\gamma$ almost surely if
\begin{equation}\label{33}
\zeta<(4-\gamma)/8
\end{equation}
which proves T\textsc{heorem}~\ref{theorem2}.
\end{proof}

From T\textsc{heorem}~\ref{theorem2}, a lower bound for $\eta$ is obtained as
\begin{equation}\label{33A}
\eta>4(1-2\zeta).
\end{equation}
Fig. 1 shows the obtained results in T\textsc{heorem}~\ref{theorem1} and T\textsc{heorem}~\ref{theorem2}. For $\zeta\leq\frac{3}{8}$ the result is exact and for
$\frac{3}{8}<\zeta<\frac{1}{2}$, the curve is a lower bound for the optimum asymptotic multiuser efficiency.

\begin{figure}
\centering
\includegraphics[width=5in]{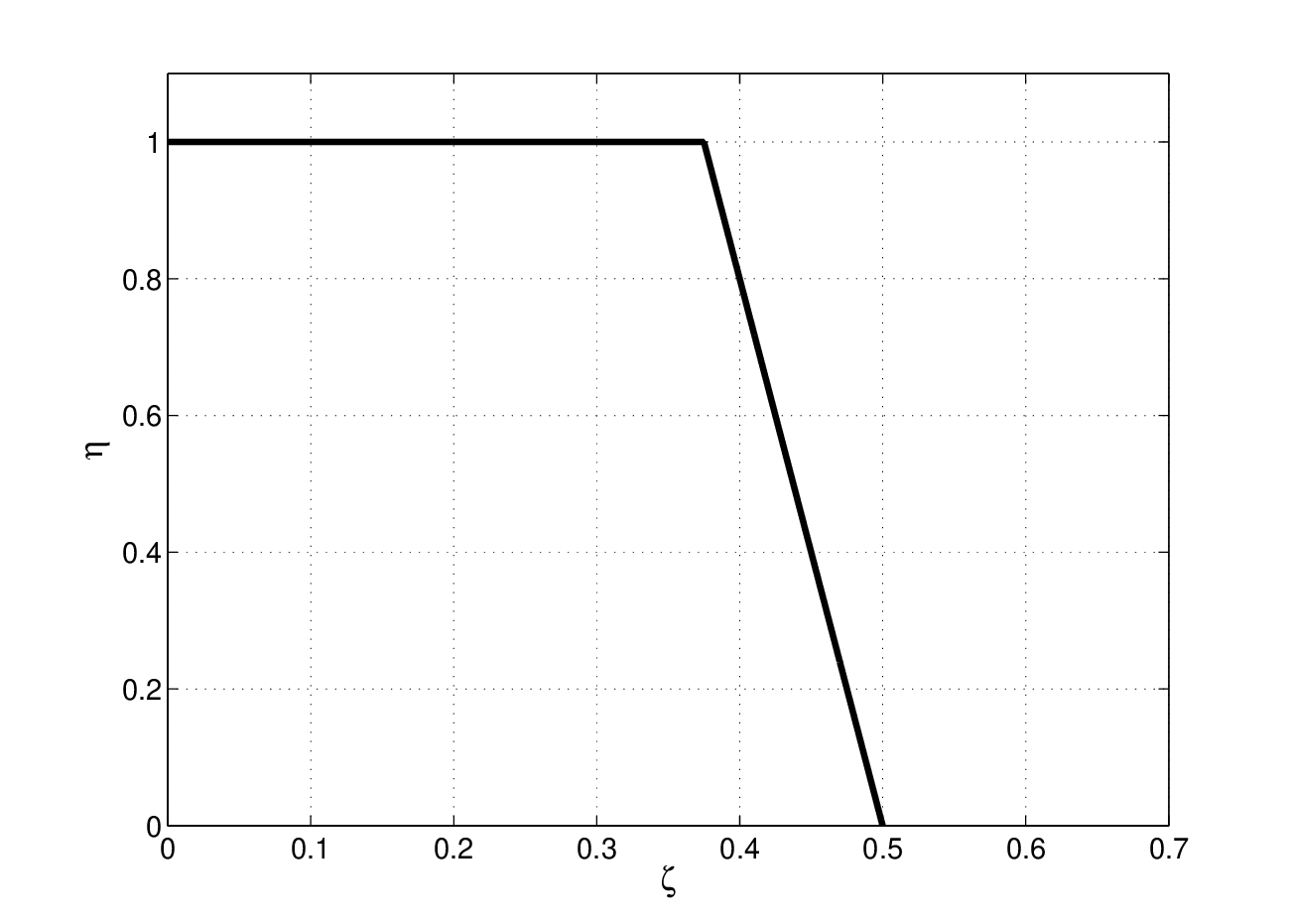}
\caption{The optimum asymptotic multiuser efficiency lower bound versus $\zeta=\frac{K}{N\log_3K}$.}
\end{figure}

\subsection{The optimum asymptotic multiuser efficiency for i.i.d. Gaussian spreading}
In this section we investigate the optimum asymptotic multiuser efficiency for a randomly spread CDMA when the entries of $H$ are i.i.d. Gaussian distributed.

\begin{theorem}\label{theorem3}
 Let $H_{i,j}\sim \mathcal{N}(0,\frac{1}{{N}})$.
The optimum asymptotic multiuser efficiency converges to $1$ almost surely as $K,N\rightarrow \infty$, if $\frac{K}{N\log_3{K}}$ is kept less than $\frac{1}{2}$.
\end{theorem}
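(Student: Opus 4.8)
The plan is to reproduce the union-bound scheme of the proof of T\textsc{heorem}~\ref{theorem1}, but to replace the binary-specific L\textsc{emma}~\ref{Lemma 1} and L\textsc{emma}~\ref{Lemma 2} by the exact law of $\matr{x}^T\matr{R}\matr{x}$ that is available in the Gaussian case. For an error vector $\matr{x}$ of weight $m$, each component $u_\ell(\matr{x})=\sum_j H_{\ell,j}x_j$ is a sum of $m$ independent $\mathcal{N}(0,1/N)$ variables, so $u_\ell(\matr{x})\sim\mathcal{N}(0,m/N)$, and the $u_\ell(\matr{x})$ are independent across $\ell$ since they come from distinct rows of $\matr{H}$. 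Hence $\matr{x}^T\matr{R}\matr{x}=\parallel\matr{u}(\matr{x})\parallel^2$ is distributed as $\frac{m}{N}\chi^2_N$, a scaled chi-square with $N$ degrees of freedom and mean $m$. Because the entries of $\matr{H}$ are i.i.d.\ with a symmetric pdf, L\textsc{emma}~\ref{Lemma 3} continues to apply and $\mathrm{P}(\matr{x}^T\matr{R}\matr{x}<t)$ depends on $\matr{x}$ only through its weight $m$.

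The first change I would make is to relax the threshold. Since the Gaussian law is continuous, weight-one vectors give $\matr{x}^T\matr{R}\matr{x}=\frac{1}{N}\chi^2_N$, which falls below $1$ with probability bounded away from $0$, so the strict threshold $1$ used in T\textsc{heorem}~\ref{theorem1} cannot be kept. Instead I would fix $\delta\in(0,1)$ and study the event $E_{K,\delta}$ that $\matr{x}^T\matr{R}\matr{x}<1-\delta$ for some nonzero $\matr{x}$. The union bound together with L\textsc{emma}~\ref{Lemma 3} then gives
\[
\mathrm{P}(E_{K,\delta})\leq\sum_{m=1}^{K}{K\choose m}2^m\,\mathrm{P}\!\left(\chi^2_N<\frac{N(1-\delta)}{m}\right).
\]
The estimate that now plays the role of the binomial-tail bound of T\textsc{heorem}~\ref{theorem1} is the standard Chernoff bound for the lower tail of a chi-square, $\mathrm{P}(\chi^2_N\leq Na)\leq 2^{-N\widetilde{\mathrm{I}}(a)}$ for $a<1$, where $\widetilde{\mathrm{I}}(a)=\frac{1}{2\ln2}\left(a-1-\ln a\right)>0$, applied with $a=\frac{1-\delta}{m}$. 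Combining it with ${K\choose m}2^m\leq 2^{K({\rm h}(m/K)+m/K)}$ and splitting the sum at $m=j_0=\lfloor K/(\log_2K)^u\rfloor$ with $u>1$ into pieces $S_1$ and $S_2$, I would argue exactly as before.

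For the small-weight part $S_1$ the exponent ${\rm h}(m/K)+m/K$ admits the same estimate as in \eqref{15A3} and is $o(1)$ uniformly, while $\widetilde{\mathrm{I}}((1-\delta)/m)\geq\widetilde{\mathrm{I}}(1-\delta)>0$ is bounded below by a positive constant, so the factor $2^{-N\widetilde{\mathrm{I}}}$ forces $S_1\rightarrow0$; note this is where the relaxed threshold pays off, since it keeps $a<1$ even at weight one. For the large-weight part $S_2$ I would bound ${\rm h}(m/K)+m/K\leq{\rm h}(2/3)+2/3=\log_23$ and use $\widetilde{\mathrm{I}}((1-\delta)/m)=\frac12\log_2m+O(1)\geq\frac12\log_2K\,(1+o(1))$ for $m\geq j_0$, so that the exponent of $S_2$ is at most $K\log_23-\frac{N}{2}\log_2K\,(1+o(1))$, which tends to $-\infty$ precisely when $\frac{K\log_23}{N\log_2K}=\frac{K}{N\log_3K}<\frac12$. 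Summing over $K$ gives $\sum_K\mathrm{P}(E_{K,\delta})<\infty$, so Borel--Cantelli yields $\liminf_K\eta\geq1-\delta$ almost surely; letting $\delta$ run through a sequence tending to $0$ and recalling $\eta\leq1$ then gives $\eta\rightarrow1$ almost surely.

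The main obstacle, and the genuinely new feature relative to T\textsc{heorem}~\ref{theorem1}, is the treatment of the small weights: L\textsc{emma}~\ref{Lemma 1} no longer rules out odd weights, and in particular weight one is not automatically bounded below by $1$, so the threshold must be softened to $1-\delta$. Once that is done, the logarithmically growing rate $\widetilde{\mathrm{I}}((1-\delta)/m)\sim\frac12\log_2m$ of the chi-square lower tail simultaneously controls the small weights and, in the large-weight regime, delivers the improved constant $\frac12$ in place of the $\frac38$ obtained for binary antipodal spreading.
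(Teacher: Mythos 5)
Your proposal is correct, and it follows the same overall architecture as the paper's proof (union bound over weight classes, L\textsc{emma}~\ref{Lemma 3} to reduce to one representative per weight, the $\frac{m}{N}\chi^2_N$ law of $\matr{x}^T\matr{R}\matr{x}$, the split of the sum at $\lfloor K/(\log_2K)^u\rfloor$, the entropy bound \eqref{14A}, and Borel--Cantelli), but it diverges in one substantive and genuinely valuable way: the treatment of the threshold and hence of the weight-one terms. The paper keeps the hard threshold $1$, writes the weight-one contribution as $G_1$ in \eqref{Gdefine}, and disposes of it in \eqref{G1calc} by citing \cite[eq.~(21)]{tse2000optimum} to claim that ${\rm P}(\matr{x}_1^T\matr{R}\matr{x}_1<1)$ decays exponentially in $N$; but since $\matr{x}_1^T\matr{R}\matr{x}_1=\frac{1}{N}\chi^2_N$ has mean exactly $1$, that probability tends to $\frac12$ and $G_1$ does not vanish --- the cited exponential decay holds only for a threshold $1-\delta$ strictly below the mean. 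Your relaxation to the event $\{\matr{x}^T\matr{R}\matr{x}<1-\delta\}$ followed by a diagonalization over $\delta\downarrow 0$ is exactly the repair this step needs, and it is fully compatible with the almost-sure convergence claim (which only requires $\liminf\eta\geq 1-\delta$ for every $\delta$). The other difference is cosmetic: for weights $j\geq 2$ the paper bounds the chi-square lower tail by the interval length times the density at the right endpoint (\eqref{36}--\eqref{37}) plus Stirling, yielding the factor $\left({\rm e}^{1-1/j}/j\right)^{N/2}$, while you use the Chernoff bound with rate $\widetilde{\rm I}(a)=\frac{1}{2\ln 2}(a-1-\ln a)$; both give the decisive $-\frac{N}{2}\log_2 j+{\rm O}(N)$ exponent and hence the same constant $\frac12$. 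Your version is the cleaner and, as written, the more rigorous of the two.
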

\begin{proof}
 By using \eqref{uniobound} and L\textsc{emma}~\ref{Lemma 4}
\begin{equation}\label{unionG}
{\rm{P}}(E_K)\leq \sum_{j=1}^{K}{{K\choose j} 2^{j} {\rm{P}}(\matr{x}_j^T\matr{R}\matr{x}_j< 1)},
\end{equation}
where $\matr{x}_j$ is an arbitrary vector with weight $j$. For sake of simplicity, we write \eqref{unionG} as
\begin{eqnarray}\label{Gdefine}
{\rm{P}}(E_K)
\leq  \underbrace{2K {\rm{P}}(\matr{x}_1^T\matr{R}\matr{x}_1< 1)}_{\eqdef G_1}+
\underbrace{\sum_{j=2}^{K}{{K\choose j} 2^{j} {\rm{P}}(\matr{x}_j^T\matr{R}\matr{x}_j< 1)}}_{\eqdef G_2},
\end{eqnarray}
where $\matr{x}_1$ is an arbitrary vector with weight 1. From \cite[eq. (21)]{tse2000optimum}, the term ${\rm{P}}(\matr{x}_1^T\matr{R}\matr{x}_1< 1)$ decays exponentially in $N$. Since we assume $\zeta=\frac{K}{N\log_3 K}$ is fixed, it can be written that
\begin{eqnarray}\label{G1calc}
G_1={\rm O}\left(K{\rm e}^{-\alpha \frac{ K}{\log_3 K}}\right),
\end{eqnarray}
where $\alpha$ is a finite positive real number. 

The term $G_2$ is calculated as follows.
It can be shown that conditioned on weight $j$
\begin{equation}\label{35}
u_\ell(\matr{x}_j)\sim \mathcal{N}\left(0,\frac{j}{N}\right).
\end{equation}
 Therefore, from \eqref{1}, $\frac{N}{j}(\matr{x}_j^T\matr{R}\matr{x}_j)$ has a chi-squared distribution with $N$ degrees of freedom. Therefore, 
\begin{equation}\label{36}
{\rm{P}}(\matr{x}_j^T\matr{R}\matr{x}_j< 1)=\int_{0}^{\frac{N}{j}} {\frac{1}{2^{N/2}\Gamma(N/2)}x^{\frac{N}{2}-1}\exp{(-x/2)}}dx.
\end{equation}
A chi-squared distribution with $N$ degrees of freedom is an increasing function in $[0,N-2]$ for $N>2$. Therefore, since $j\geq 2$, the term inside of the integration in \eqref{36} 
is an increasing function. Thus,
 \begin{equation}\label{37}
{\rm{P}}(\matr{x}_j^T\matr{R}\matr{x}_j<1)\leq \frac{N}{j 2^{N/2}\Gamma(N/2)}\left(\frac{N}{j}\right)^{\frac{N}{2}-1}\exp{\left(-\frac{N}{2j}\right)}.
\end{equation}
Without loss of generality we assume that $N$ is an even integer. Based on Stirling's formula a lower bound for $\Gamma(N/2)$ is
 \begin{eqnarray}\label{38}
\Gamma{\left(N/2\right)}=\left(N/2-1\right)!=\frac{\left(N/2\right)!}{N/2}> 2\sqrt{\pi/ N}\left(\frac{N}{2e}\right)^{N/2},
\end{eqnarray}
Therefore, 
 \begin{equation}\label{39}
G_2\leq \sum_{j=2}^{K} \frac{1}{2} \sqrt{\frac{N}{\pi}}{K\choose j} 2^{j}
\left( \frac{{\rm e}^{1-\frac{1}{j}}}{j}\right)^{\frac{N}{2}}.
\end{equation}
Applying the bounds \eqref{14A}, we divide the summation as follows
 \begin{equation}
G_2\leq \sum_{j=2}^{K}\frac{1}{2} \sqrt{\frac{N}{\pi}} 2^{K\left({\rm{h}}\left(\frac{j}{K}\right)+\frac{j}{K}\right)} \left( \frac{{\rm e}^{1-\frac{1}{j}}}{j}\right)^{\frac{N}{2}}=S_3+S_4,
\end{equation}
where 
 \begin{eqnarray}
S_3&=& \sum_{j=2}^{j_1}\frac{1}{2} \sqrt{\frac{N}{\pi}} 2^{K\left({\rm{h}}\left(\frac{j}{K}\right)+\frac{j}{K}\right)} \left( \frac{{\rm e}^{1-\frac{1}{j}}}{j}\right)^{\frac{N}{2}}, \\
S_4&=& \sum_{j=j_1+1}^{K}\frac{1}{2} \sqrt{\frac{N}{\pi}} 2^{K\left({\rm{h}}\left(\frac{j}{K}\right)+\frac{j}{K}\right)} \left(\frac{{\rm e}^{1-\frac{1}{j}}}{j} \right)^{\frac{N}{2}}, \\
j_1&=&\left\lfloor\frac{K}{(\log_2K)^u}\right\rfloor,
\end{eqnarray}
and $u>1$ is a constant. Since $\frac{{\rm e}^{1-\frac{1}{j}}}{j}$ is a decreasing function of $j$, one has $\frac{{\rm e}^{1-\frac{1}{j}}}{j} \leq \frac{\sqrt{\rm e}}{2}$ for $j\geq 2$.
Using the same argument as that in the previous proofs, we can prove $S_3\rightarrow 0$ by bounding $S_3$ as
\begin{eqnarray} 
S_3\leq \frac{K}{2(\log_2K)^u} \sqrt{\frac{N}{\pi}} 2^{\frac{u K\log_2\log_2K}{(\log_2K)^u}+\frac{K\log_2(2{\rm e})}{(\log_2K)^u}+(N/4)\log_2\left({\rm e}/4\right)} \rightarrow 0.
\end{eqnarray}

One can also bound $S_4$ as 
\begin{eqnarray} 
S_4&\leq& \left(K-\frac{K}{(\log_2K)^u} +1 \right)\frac{1}{2} \sqrt{\frac{N}{\pi}} 2^{K\log_23}\left(\frac{{\rm e} (\log_2K)^u}{K} \right)^{N/2} \nonumber \\
&=& \left(K-\frac{K}{(\log_2K)^u} +1 \right)\frac{1}{2} \sqrt{\frac{N}{\pi}} 2^{K\log_23 -(N/2)\log_2K +(N/2)u\log_2\log_2K+(N/2)\log_2{\rm e}}.
\end{eqnarray}
The above upper bound tends to 0 as $K\rightarrow \infty$ if $N$ goes to infinity in such a way as to satisfy 
\begin{eqnarray} 
\zeta=\frac{K}{N\log_3K}<\frac{1}{2}.
\end{eqnarray}
Furthermore, it can be shown that
\begin{eqnarray}\label{BorelGaussian}
\sum_{K=1}^{+\infty}{\rm{P}}(E_K)<\infty,
\end{eqnarray}
which together with the application of the Borel-Cantelli lemma proves that $\eta$ converges to 1 almost surely if $K \rightarrow \infty$ and $\zeta$ is kept less than $\frac{1}{2}$.
\end{proof}

Tse and Verd\'{u} in \cite{ tse2000optimum} prove that the optimum asymptotic multiuser efficiency of a CDMA system with general i.i.d. entries converges to 1 when $K\rightarrow\infty$ and $\frac{K}{N}$ is kept finite. However, in T\textsc{heorem} 1, T\textsc{heorem} 2 and T\textsc{heorem} 3 we prove that the loading factor, $\frac{K}{N}$, can grow logarithmically with $K$ for binary antipodal and Gaussian matrices. In fact, the result in \cite{tse2000optimum} can be obtained for random binary antipodal and random Gaussian spreading when $\zeta\rightarrow 0$ as presented here.

\section{Random Detecting Matrices}
In this section we study detecting matrices. They are closely connected to optimum asymptotic multiuser efficiency. Detecting matrices originate from the coin weighing problem in mathematics \cite{soderberg1963combinatory}-\cite{mow2009recursive}.
Let $\mathcal{S}$ be a subset of $\mathbb{R}$. For a given data set $\mathcal{S}$ such that $\matr{x}_1,\matr{x}_2\in \mathcal{S}^K$, an $N \times K$ matrix $\matr{H}$ is called detecting if and only if
\begin{equation}\label{43}
\matr{H}\matr{x}_1=\matr{H}\matr{x}_2 \Rightarrow \matr{x}_1=\matr{x}_2,
\end{equation}
where $\matr{x}_1$ and $\matr{x}_2$ are $K \times 1$ vectors. Another representation form of \eqref{43} is
\begin{equation}\label{44}
\matr{H}\matr{x}=\matr{0}_{\{N\times 1\}} \Rightarrow \matr{x}=\matr{0}_{\{K\times 1\}},
\end{equation}
where $\matr{x}\in\left\{\mathcal{S}^K-\mathcal{S}^K\right\}$ in which
\begin{equation}
\left\{\mathcal{S}^K-\mathcal{S}^K\right\}=\left\{\matr{x}_1-\matr{x}_2|\matr{x}_1,\matr{x}_2\in\mathcal{S}^K\right\}.
\end{equation}
 One can write \eqref{44} as
\begin{equation}\label{45}
{\rm{Null}}(\matr{H})\bigcap \left\{\mathcal{S}^K-\mathcal{S}^K\right\}=\{\matr{0}_{\{K\times 1\}}\},
\end{equation}
where ${\rm{Null}}(\matr{H})$ is the null space of $H$. 
For any binary input $\mathcal{S}=\{a,b\}$, $a\neq b$, a matrix $\matr{H}$ is detecting if and only if
 \begin{equation}\label{detectingbinarykol}
{\rm{Null}}(\matr{H})\bigcap \{\pm (a-b),0\}^K=\{\matr{0}_{\{K\times 1\}}\}.
\end{equation}
Note that this is equivalent to
\begin{equation}\label{detectingbinary}
{\rm{Null}}(\matr{H})\bigcap \{\pm 1,0\}^K=\{\matr{0}_{\{K\times 1\}}\}.
\end{equation}

From \eqref{5} and \eqref{detectingbinary}, it can be observed that there is a connection between $\eta$ and the concept of detecting matrices. In fact if in a CDMA system the spreading matrix, $\matr{H}$, is not detecting then there is an error vector $\matr{x}\neq \matr{0}_{\{K \times 1\}}$ such that $\matr{x}^T\matr{R}\matr{x}=0$. Therefore, if the spreading matrix is not a detecting matrix then the optimum asymptotic multiuser efficiency is equal to $0$.

In \cite{lindstrom1966combinatorial}, it is proven that
\begin{equation}\label{46}
\lim _{K\rightarrow \infty}{\frac{N_0 \log_{2}{K}}{K}}=2,
\end{equation}
where $N_0$ is the minimum possible of $N$ such that an $N\times K$  binary $\{0,1\}$ or binary antipodal $\{\pm 1\}$ detecting matrix exists for any binary input \cite{du1993combinatorial}. Therefore, it is concluded that the optimum asymptotic multiuser efficiency is equal to 0 when $K\rightarrow \infty$ and $\zeta=\frac{K}{N\log_3K}$ is kept greater than $\frac{\log_23}{2}$. This result can be considered as a supplementary result to T\textsc{heorem}~\ref{theorem1} and T\textsc{heorem}~\ref{theorem2}. 
Note that there is no result for the optimum asymptotic multiuser efficiency of a random binary antipodal spread CDMA in $\zeta\in \left( \frac{1}{2},\frac{\log_23}{2}\right)$ so far. 

From \eqref{5} and \eqref{detectingbinary} it is also observed that if the optimum asymptotic multiuser efficiency is greater than 0 then the spreading matrix is a detecting matrix. Therefore, from Fig. 1 it can be concluded that a binary antipodal random matrix is detecting if $K\rightarrow \infty$ and $\zeta=\frac{K}{N\log_3K}$ is kept less than $\frac{1}{2}$. This was also proven by  Erd\H{o}s and R\'{e}nyi in 1963 \cite{erdgs1963two}. However, the converse statement might not be true. This means that a spreading matrix can be detecting but $\eta$ might be vanishing in the large system limit. Therefore, being detecting cannot be considered isomorphic to non-vanishing $\eta$. However, it is beneficial to know under which condition a spreading matrix is detecting.
In the next theorem we generalize the result by Erd\H{o}s and R\'{e}nyi for an i.i.d. random matrix whose elements are chosen randomly from a finite set.

\begin{theorem}\label{theorem4}
 Let $\Sigma=\{ d_1,d_2,\cdots d_{\rm m}\}$ be a symmetric finite set and $\psi$ a zero mean symmetric non-degenerate probability
distribution on $\Sigma$.
An $N\times K$ matrix $\matr{H}$ whose elements are i.i.d. with distribution $\psi$ is a detecting matrix for any binary input set $\mathcal{S}\in\{a,b\}$, $a\neq b$, if
$K,N\rightarrow \infty$ and $\zeta=\frac{K}{N\log_3K}$ is kept less than $\frac{{\rm{rank}}(\Sigma)}{2}$,
where ${\rm{rank}}(\Sigma)$ denotes the dimension of $\Sigma$ as a set of vectors over the field of rational numbers $\mathbb{Q}$,
i.e., the maximum size of a subset of $\Sigma$ such that no nontrivial rational linear
combination of which vanishes.
\end{theorem}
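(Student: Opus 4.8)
The plan is to follow the union-bound-plus-Borel--Cantelli scheme of T\textsc{heorem}~\ref{theorem1}--T\textsc{heorem}~\ref{theorem3}, replacing the quadratic-form probability ${\rm P}(\matr x^T\matr R\matr x<1)$ by the null-space probability ${\rm P}(\matr H\matr x=\matr 0)$. By \eqref{detectingbinary}, $\matr H$ fails to be detecting precisely when some nonzero $\matr x\in\{\pm1,0\}^K$ lies in ${\rm Null}(\matr H)$. Writing each row of $\matr H$ as $\matr h^T$ with $\matr h\in\mathbb{R}^K$ having i.i.d. entries of law $\psi$, independence across rows gives ${\rm P}(\matr H\matr x=\matr 0)=q(\matr x)^N$ with $q(\matr x)\eqdef{\rm P}(\matr h^T\matr x=0)$. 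Since $\psi$ is symmetric, each term $x_i h_i$ with $x_i\in\{\pm1\}$ has the same law as $h_i$, so $q(\matr x)$ depends only on the weight $w$ of $\matr x$ (the analogue of L\textsc{emma}~\ref{Lemma 3}); call it $q(w)={\rm P}\bigl(\sum_{i=1}^{w}h_i=0\bigr)$ with $h_i$ i.i.d. of law $\psi$. A union bound over the $\binom{K}{w}2^{w}$ error vectors of each weight then yields
\begin{equation}
{\rm P}(D_K)\le\sum_{w=1}^{K}\binom{K}{w}2^{w}q(w)^{N},
\end{equation}
where $D_K$ denotes the event that $\matr H$ is not detecting.

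The decisive step, which I expect to be the main obstacle, is the estimate
\begin{equation}
q(w)\le C\,w^{-r/2},\qquad r\eqdef{\rm rank}(\Sigma),
\end{equation}
for a constant $C$ depending only on $\Sigma$ and $\psi$. To prove it I would fix a maximal $\mathbb{Q}$-independent subset $\{e_1,\dots,e_r\}\subseteq\Sigma$ spanning $V\eqdef{\rm span}_{\mathbb{Q}}(\Sigma)$, expand each $d_k=\sum_{s=1}^{r}c_{ks}e_s$ with $c_{ks}\in\mathbb{Q}$, and clear denominators by an integer $M$ to obtain $\matr a_k\eqdef(Mc_{k1},\dots,Mc_{kr})\in\mathbb{Z}^r$. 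Because the $e_s$ are $\mathbb{Q}$-linearly independent, $\sum_i h_i=0$ if and only if the lattice walk $\matr T_w\eqdef\sum_{i=1}^{w}\matr a_{k(i)}$ returns to the origin, so $q(w)={\rm P}(\matr T_w=\matr 0)$. The increments are i.i.d., have zero mean (this is where the zero-mean hypothesis on $\psi$ enters, since $\sum_k\psi_k d_k=0$ forces $\sum_k\psi_k\matr a_k=\matr 0$), and, because $\psi$ is non-degenerate with support spanning $V$, have a nonsingular $r\times r$ covariance matrix. The multivariate local central limit theorem for lattice walks then gives exactly the $w^{-r/2}$ decay. The delicate points are verifying full-rank covariance (equivalently, that the effective dimension of the walk equals ${\rm rank}(\Sigma)$) and controlling the lattice/periodicity so that the exponent is genuinely $r/2$.

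With this bound in hand the remainder parallels T\textsc{heorem}~\ref{theorem1}. Substituting $q(w)\le Cw^{-r/2}$ and applying the entropy bound \eqref{14A} in the form $\binom{K}{w}2^{w}\le 2^{K({\rm h}(w/K)+w/K)}$ gives
\begin{equation}
{\rm P}(D_K)\le\sum_{w=1}^{K}2^{K({\rm h}(w/K)+w/K)}\,C^{N}\,w^{-rN/2}.
\end{equation}
I would split the sum at $w_0=\lfloor K/(\log_2K)^{u}\rfloor$ with $u>1$. For the small-weight part the factor $q(w)^N$ decays exponentially in $N$ (using $q(w)<1$ for $w\ge1$, which holds since $\psi$ is non-degenerate), dominating the at-most-$K^{w}$ combinatorial growth exactly as in the bound on $S_1$. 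For the large-weight part one uses ${\rm h}(t)+t\le\log_23$ together with $w^{-rN/2}=2^{-(rN/2)\log_2 w}$ and $\log_2 w\ge\log_2 w_0\sim\log_2K$; the exponent is then at most $K\log_23-(rN/2)\log_2K+o(N\log_2K)$, which tends to $-\infty$ precisely when
\begin{equation}
\frac{K\log_23}{N\log_2K}=\frac{K}{N\log_3K}=\zeta<\frac{r}{2}.
\end{equation}

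Hence $\sum_K{\rm P}(D_K)<\infty$ whenever $\zeta<{\rm rank}(\Sigma)/2$, and the Borel--Cantelli lemma shows that $\matr H$ is detecting almost surely in the large-system limit. Finally, by \eqref{detectingbinarykol} being detecting for $\mathcal S=\{a,b\}$ is equivalent to the null-space condition \eqref{detectingbinary} over the fixed alphabet $\{\pm1,0\}$, so the conclusion is independent of the particular values $a\neq b$.
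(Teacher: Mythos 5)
Your proposal follows essentially the same route as the paper: a union bound over error vectors grouped by weight, factorization across rows to get $q(w)^N$, the key decay estimate $q(w)={\rm O}\bigl(w^{-{\rm rank}(\Sigma)/2}\bigr)$, a split of the sum at $\lfloor K/(\log_2K)^u\rfloor$, and the Borel--Cantelli lemma. The only substantive difference is that the paper obtains the key estimate by citing \cite[Lemma 3]{alishahi2012design} while you sketch its proof via a lattice local CLT; one small correction there is that the zero mean of the lattice increments should be deduced from the symmetry of $\psi$ and $\Sigma$ rather than from $\sum_k\psi_k d_k=0$, since $\mathbb{Q}$-linear independence of the $e_s$ does not allow equating coefficients in a linear combination with real (irrational) weights.
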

\begin{proof}
Suppose that $D_K$ is the event that $\matr{H}$ is a detecting matrix.
From \eqref{detectingbinary}, $\matr{H}$ is not a detecting matrix if there is at least one  $\matr{x}\in \{\pm 1,0\}^K\setminus \left\{{\matr{0}_{\{K\times 1\}}}\right\}$ such that $\matr{H}\matr{x}=\matr{0}_{\{N\times 1\}}$.
By using the union bound, a lower bound for ${\rm{P}}(D_K)$ is obtained as
\begin{equation}\label{47}
{\rm{P}}(D_K)=1-{\rm{P}}(\bar{D}_K)\geq 1-\sum_{\matr{x}\in \{\pm 1,0\}^K\setminus \left\{{\matr{0}}\right\}  }{{\rm{P}}(\matr{H}\matr{x}=\matr{0}_{\{N\times 1\}})},
\end{equation}
where $\bar{D}_K$ is the complement of $D_K$.
From L\textsc{emma}~\ref{Lemma 4}, ${\rm{P}}(\matr{H}\matr{x}=\matr{0}_{\{N\times 1\}})$ is equal for all $\matr{x}\in \{\pm 1,0\}^K$ with the same weight. Therefore, \eqref{47} can be written as
\begin{equation}\label{48}
{\rm{P}}(D_K)\geq 1-\sum_{j=1}^{K}{{K \choose j} 2^{j}{\rm{P}}(\matr{H}\matr{x}_j=\matr{0}_{\{N\times 1\}})}.
\end{equation}
where $\matr{x}_j$ is a vector with weight $j$.
 Entries of $\matr{H}$ are i.i.d.. Therefore, ${\rm{P}}(\matr{H}\matr{x}_j=\matr{0}_{\{N\times 1\}})$ in \eqref{48} can be written as
\begin{eqnarray}\label{49}
{\rm{P}}(\matr{H}\matr{x}_j=\matr{0}_{\{N\times 1\}})=\left({\rm{P}}\left(\sum_{i=1}^{K}{H_{ri}x_j(i)}=0\right)\right)^N,
\end{eqnarray}
where $r\in\{1,2,\cdots,N\}$. Hence,
\begin{equation}\label{50}
{\rm{P}}(D_K)\geq 1-\sum_{j=1}^{K}{{K \choose j} 2^{j}\left({\rm{P}}\left(\sum_{i=1}^{K}{H_{ri}x_j(i)}=0\right)\right)^N}.
\end{equation}
Eq. \eqref{50} can be simplified more as
\begin{equation}\label{joda1}
{\rm{P}}(D_K)\geq 1-2K\left({\rm{P}}(H_{r1}=0)\right)^K-\sum_{j=2}^{K}{{K \choose j} 2^{j}\left({\rm{P}}\left(\sum_{i=1}^{K}{H_{ri}x_j(i)}=0\right)\right)^N}.
\end{equation}
Then, based on \cite[Lemma 3]{alishahi2012design}
\begin{equation}\label{51}
{\rm{P}}\left(\sum_{i=1}^{K}{H_{ri}x_j(i)}=0\right)={\rm{O}}\left(j^{-\frac{{\rm{rank}}(\Sigma)}{2}}\right).
\end{equation}
Therefore,
\begin{equation}\label{52}
{\rm{P}}(D_K)\geq 1-2K\left({\rm{P}}(H_{r1}=0)\right)^K-\sum_{j=2}^{K}{{K \choose j} 2^{j}{\rm{O}}\left(j^{-\frac{N{{\rm{rank}}}(\Sigma)}{2}}\right)}.
\end{equation}
We divide the range of summation as 
\begin{equation}\label{splitth4}
{\rm{P}}(D_K)\geq 1-S_5-S_6-S_7,
\end{equation}
where
\begin{eqnarray}
S_5&=&2K\left({\rm{P}}(H_{r1}=0)\right)^K,\\
S_6&=&\sum_{j=2}^{j_1}{{K \choose j} 2^{j}{\rm{O}}\left(j^{-\frac{N{{\rm{rank}}}(\Sigma)}{2}}\right)}, \\
S_7&=&\sum_{j=j_1+1}^{K}{{K \choose j} 2^{j}{\rm{O}}\left(j^{-\frac{N{\rm{rank}}(\Sigma)}{2}}\right)},\\
j_1&=&\left\lfloor\frac{K}{(\log_2K)^u}\right\rfloor,
\end{eqnarray}
 and $u>1$ is a constant. 

 Since the pdf of the elements of $\matr{H}$ is a non-degenerate pdf, ${\rm{P}}(H_{r1}=0)$ is less than $1$. Thus, $S_5$ tends to zero when $K\rightarrow +\infty$. Next we will show that $S_6$ and $S_7$ tend to 0 as $K,N\rightarrow \infty$ while $\zeta$ is kept less than $\frac{{\rm{rank}}(\Sigma)}{2}$. 

We first show that $S_6\rightarrow 0$ holds. Using \eqref{15A2}, one can bound $S_6$ as 
\begin{eqnarray}
S_6&\leq& \sum_{j=2}^{j_1}{ 2^{K\left({\rm{h}}\left(\frac{j}{K}\right)+\frac{j}{K}\right)}   {\rm{O}}\left(j^{-\frac{N{{\rm{rank}}}(\Sigma)}{2}}\right)} \nonumber \\
&\leq& \frac{cK}{\left ( \log_2K \right)^u} 2^{\frac{u K\log_2\log_2K}{(\log_2K)^u}+\frac{K\log_2(2{\rm e})}{(\log_2K)^u}-\frac{N{\rm rank}(\Sigma)}{2} },
\end{eqnarray}
where $c$ is a finite constant. The dominant term in the exponent is $-\frac{N{\rm rank}(\Sigma)}{2} $, which tends to $-\infty$ as $K,N\rightarrow \infty$. This proves that $S_6\rightarrow 0$ holds.

We next show that $S_7$ also tends to 0 provided that $\zeta <\frac{{\rm{rank}}(\Sigma)}{2}$. Since the function ${\rm h}(t)+t$ takes its maximum at $t=2/3$, one has
\begin{eqnarray}
S_7&\leq& \sum_{j=j_1+1}^{K}{ 2^{K\left({\rm{h}}\left(\frac{j}{K}\right)+\frac{j}{K}\right)}   {\rm{O}}\left(j^{-\frac{N{{\rm{rank}}}(\Sigma)}{2}}\right)} \nonumber \\
&\leq& \acute{c}\left( K-\frac{K}{\left ( \log_2K \right)^u} -1 \right)2^{K\log_23 -\frac{N{\rm rank}(\Sigma)}{2} \left(\log_2K-u\log_2\log_2K \right)},
\end{eqnarray}
where $\acute{c}$ is a finite constant. The right-hand side vanishes in the limit $K\rightarrow \infty$ provided that $N$ goes to infinity in such a way as to satisfy 
\begin{equation} \label{54aaa}
\zeta=\frac{K}{N\log_3K}<\frac{{\rm{rank}}(\Sigma)}{2}.
\end{equation}
 This proves that $S_7\rightarrow 0$ holds when $\zeta<\frac{{\rm{rank}}(\Sigma)}{2}$. 

One can easily apply the Borel-Cantelli lemma and conclude that a matrix $\matr{H}$ fulfilling the conditions in T\textsc{heorem}~\ref{theorem4} is detecting almost surely.
\end{proof}
 As an example, if $H_{m,n}\in\left\{\pm \frac{1}{\sqrt{2N}}\pm \frac{{\rm j}}{\sqrt{2N}}\right\}$, where ${\rm j}=\sqrt{-1}$, \eqref{54aaa} is written as $\zeta<1$. T\textsc{heorem}~\ref{theorem4} is a generalized form of the theorem presented in \cite{erdgs1963two}. Note that T\textsc{heorem}~\ref{theorem4} only shows that a spreading matrix is a detecting matrix if $\zeta<\frac{{\rm{rank}}(\Sigma)}{2}$. However, as aforementioned the optimum asymptotic multiuser efficiency may vanish for a detecting matrix in the large system limit. 
For the matrix whose elements are chosen randomly from a finite set as defined in T\textsc{heorem}~\ref{theorem4}, the optimum asymptotic multiuser efficiency is not known when the loading factor grows logarithmically with $K$. However, one can consider the condition $\zeta<\frac{{\rm{rank}}(\Sigma)}{2}$ as a necessary condition to have $\eta\rightarrow 1$.

\section{Conclusion}
We derived a bound for the optimum asymptotic multiuser efficiency of a randomly spread CDMA with binary antipodal and Gaussian spreading. BPSK input signals were considered. In fact, we obtained a condition such that the performances of the optimum detector in binary antipodal and Gaussian randomly spread CDMA systems are close to the single user performance at high SNR when $K$ and $N\rightarrow \infty$ and even $\frac{K}{N}$ grows logarithmically with $K$. The connection between detecting matrices and the optimum asymptotic multiuser efficiency was also investigated. It was proven that for any binary input, an $N\times K$ random matrix whose entries are chosen randomly from a finite set, $\Sigma$, with a symmetric pdf is a detecting matrix if $K$ and $N\rightarrow \infty$ and $\frac{K}{N\log_3K}$ is kept less than $\frac{{\rm{rank}}(\Sigma)}{2}$.

\begin{appendices}
\section{}\label{A}
L\textsc{emma}~\ref{Lemma 3} is proven as follows:
\begin{proof}
 Assume that $\matr{x}_m$ is a deterministic vector with weight $m$ and each element of the matrix $\matr{H}$ has a symmetric pdf $ \rho(\cdot)$. To prove the lemma it is enough to show that the pdf of $\matr{x}_m^T\matr{R}\matr{x}_m$ only depends on $m$. Since the elements of $\matr{H}$ are i.i.d., it is concluded that $u_\ell(\matr{x}_m)$ defined in \eqref{1} for $\ell=1\cdots N$ are independent and have the same pdf. Therefore, from \eqref{1} it is enough to show that the pdf of $u_\ell(\matr{x}_m)$ only depends on $m$. From the definition, $u_\ell(\matr{x}_m)=\sum_{n=1}^{K}H_{\ell,n}x_{m}(n)$. This means that $u_\ell(\matr{x}_m)$ is equal to a linear combination of $m$ independent random variables. Furthermore, since $\rho (\cdot)$ is symmetric, the pdf of $H_{\ell,n}x_{m}(n)$ is equal to the pdf of $H_{\ell,n}$ for $x_{m}(n)\in \{\pm 1\}$. Therefore, it is concluded that the pdf of $u_\ell(\matr{x}_m)$ is equal to 
\begin{equation}\label{wightdep}
\rho(H_{\ell,i_1})\ast \rho(H_{\ell,i_2})\ast \cdots \ast\rho(H_{\ell,i_m}),
\end{equation}
where $\ast$ is the convolution operator and $\{i_1,\cdots,i_m\}$ are the indices of those elements of $\matr{x}$ which are not zero. From \eqref{wightdep} it is observed that the pdf of $u_\ell(\matr{x}_m)$ only depends on the weight $m$ and this proves the lemma. 
\end{proof}
\section{}\label{B}
From \eqref{11},
\begin{equation}
\frac{{\rm{p}}(j+1)}{{\rm{p}}(j)}=\frac{2j+1}{2j+2},
\end{equation}
which implies that $\frac{{\rm{p}}(j+1)}{{\rm{p}}(j)}<1$ for $j\geq 1$. Therefore, the function ${\rm{p}}(j)$ is a decreasing function and 
\begin{equation}
{\rm{p}}(j)\leq {\rm{p}}(1)=\frac{1}{2} \Rightarrow 1-p(j)\geq \frac{1}{2},
\end{equation}
and therefore
\begin{equation}
 \left \lfloor N(1-{\rm{p}}(j))\right \rfloor \geq \left \lfloor \frac{N}{2}\right \rfloor.
\end{equation}

\section{}\label{C}
Let $r\in\{1,\cdots,m\}$. By using 
\begin{eqnarray}\label{ApB1}
1=	\sum_{n=0}^{m} {m \choose n} q^n (1-q)^{m-n}\geq {m \choose r} q^r (1-q)^{m-r},
\end{eqnarray}
and letting $q=\frac{r}{m}$, it is obtained that
\begin{eqnarray}\label{ApB2}
 {m \choose r} 2^{-m{\rm{h}}\left( \frac{r}{m}\right)}\leq 1.
\end{eqnarray}
Thus, \eqref{14A} is proven.

\section{}\label{D}
Based on the main theorem in \cite{robbins1955remark} on Stirling's formula, we can write
\begin{eqnarray}\label{str}
 \sqrt{2\pi}n^{n+\frac{1}{2}}{\rm e}^{-n}\leq n!\leq {\rm e} n^{n+\frac{1}{2}}{\rm e}^{-n}.
\end{eqnarray}
By using the bounds \eqref{str}, we obtain
\begin{eqnarray}\label{str2}
 p(j)=\frac{(2j)!}{(j!)^2}2^{-2j}\leq \frac{ {\rm e} (2j)^{2j+\frac{1}{2}}{\rm e}^{-2j}}{2\pi j^{2j+1}{\rm e}^{-2j}}2^{-2j}=\frac{{\rm e}}{\pi \sqrt{2j}}.
\end{eqnarray}

\end{appendices}

\section*{Acknowledgment}
The authors would like to thank Dr. Kasra Alishahi for helpful comments, and the associate editor and the anonymous reviewers for great comments and suggestions.

\bibliographystyle{IEEEtran}
\bibliography{lit}  

\end{document}